%
\documentclass[11pt,a4paper]{article}
\usepackage[a4paper,hmargin=2.8cm,vmargin=3cm]{geometry}
\usepackage[utf8x]{inputenc}
\usepackage{amsmath,amsthm,amsfonts,amssymb}

\newtheorem{theorem}{Theorem}[section]  
\newtheorem{corollary}[theorem]{Corollary}

\newtheorem*{remark}{Remark}

\newcommand{\eps}{\varepsilon}

\newcommand{\bR}{{\mathbb R}}

\newcommand{\bZ}{{\mathbb{Z}}}

\newcommand{\be}{\begin{equation}}
\newcommand{\ee}{\end{equation}}

\newcommand{\cF}{{\cal F}}
\newcommand{\cH}{{\cal H}}

\newcommand{\cE}{{\cal E}}
\newcommand{\cN}{{\cal N}}
\newcommand{\cU}{{\cal U}}

\newcommand{\bN}{{\mathbb N}}

\newcommand{\ph}{{\varphi}}

\newcommand{\tr}{\mathop{\mathrm{tr}}}

\newcommand{\wt}{\widetilde}



\def \bea{\begin{eqnarray}}
\def \eea{\end{eqnarray}}

\title{Hartree-Fock dynamics for weakly interacting fermions}

\author{Niels Benedikter, Marcello Porta\thanks{Supported by ERC Grant MAQD 240518 } \, and Benjamin Schlein\thanks{Partially supported by ERC Grant MAQD 240518} \\ \\  Institute of Mathematics, University of Zurich,\\  Winterthurerstrasse 190, 8057 Zurich}

\begin{document}

\maketitle

\begin{abstract}
We review recent results \cite{BPS} concerning the evolution of fermionic systems.
We are interested in the mean field regime, where particles experience many weak collisions. For fermions, the mean field regime is naturally linked with a semiclassical limit. 
Assuming some regularity of the interaction potential we show that the many body evolution of initial states close to Slater determinants exhibiting the appropriate semiclassical structure can be approximated by the Hartree-Fock equation. Our method provides precise bounds on the rate of the 
convergence. 
\end{abstract}



\section{Introduction}

We consider systems of $N$ fermions, described by wave functions $\psi_N \in L^2 (\bR^{3N}, dx_1 \dots dx_N)$, antisymmetric with respect to permutations. Since it does not play any role in our analysis, we neglect the spin of the particles. 

Initially, particles are confined by a trapping potential in a volume of order one (we will consider the system at or close to zero temperature; hence the initial data will be chosen close to the ground state of the trapped Hamiltonian). We are interested in understanding the evolution of the system resulting from a change of the external fields. For example, we study the dynamics generated by the translation invariant Hamiltonian 
\[ H_N = \sum_{j=1}^N -\Delta_{x_j} + \lambda \sum_{i<j}^N V(x_i -x_j) . \]
In this case, the external fields have been completely switched off. Because of the Pauli principle, the kinetic energy of the $N$ fermions at time $t=0$ is of the order $N^{5/3}$. The potential energy, on the other hand, is of the order $\lambda N^2$. Hence, to observe a non-trivial effective dynamics, we have to choose the coupling constant $\lambda$ to be of the order $N^{-1/3}$. Also, since the typical velocity of particles is large, of the order $N^{1/3}$, we can only hope to follow the evolution for 
times of the order $N^{-1/3}$. Rescaling time, we are led to the Schr\"odinger equation
\[ i N^{1/3} \partial_t \psi_{N,t} = \left[ \sum_{j=1}^N -\Delta_{x_j} + \frac{1}{N^{1/3}} \sum_{i<j}^N V(x_i -x_j) \right] \psi_{N,t} . \]
To write this equation in a more familiar form, we set $\eps = N^{-1/3}$ and we multiply it by $\eps^2$. We find
\begin{equation}\label{eq:schr} i \eps \partial_t \psi_{N,t} = \left[ \sum_{j=1}^N -\eps^2 \Delta_{x_j} + \frac{1}{N} \sum_{i<j}^N V(x_i -x_j) \right] \psi_{N,t} . \end{equation}
In Eq. (\ref{eq:schr}) we recover the coupling constant of the order $N^{-1}$ characterizing the mean field limit of bosonic systems. We observe, however, that in the fermionic case, the mean field regime is naturally linked with a semiclassical limit, with $\eps = N^{-1/3}$ playing the role of Planck's constant and converging to zero as $N \to \infty$. 

As explained above, we would like to study the solution of the Schr\"odinger equation (\ref{eq:schr}) for initial data $\psi_{N,0}$ describing particles trapped in a volume of order one. In particular, at (or close to) zero temperature, we are interested in initial data close to the ground state of an Hamiltonian of the form 
\begin{equation}\label{eq:trapH} H_N^{\text{trap}} =  \sum_{j=1}^N \left( -\eps^2 \Delta_{x_j} + V_\text{ext} (x_j) \right) + \frac{1}{N} \sum_{i<j}^N V(x_i -x_j) \end{equation}
where $V_\text{ext}$ is a confining external potential. 

The ground state of (\ref{eq:trapH}) is expected to be approximated by a Slater determinant
\begin{equation}\label{eq:slater} \psi_N^{\text{slater}} (x_1, \dots , x_N) = \frac{1}{\sqrt{N!}}  \det \left( f_i (x_j) \right)_{1 \leq i,j \leq N}\end{equation}
where $\{ f_j \}_{j=1}^N$ is an orthonormal system of $N$ functions in $L^2 (\bR^3)$. 
For an arbitrary $\psi_N \in L^2 (\bR^{3N})$, antisymmetric with respect to permutations, we define the reduced one-particle density by taking the partial trace of the orthogonal projection $|\psi_N \rangle \langle \psi_N|$ onto $\psi_N$ over $(N-1)$ particles (the result is a non-negative trace class operator on $L^2 (\bR^3)$; we normalize reduced densities so that their trace is $N$). For the Slater determinant (\ref{eq:slater}), the reduced one-particle density $\omega_N$ can be easily computed to be the orthogonal projection onto the $N$ dimensional space spanned by $f_1, \dots , f_N$, i.e. 
\[ \omega_N = \sum_{j=1}^N |f_j \rangle \langle f_j| . \]
Slater determinants are quasi-free states; the expectation of any observable in the state $\psi_N^{\text{slater}}$ can be expressed (through Wick's theorem) as a function of the reduced one-particle density $\omega_N$. 
In particular, if we restrict our attention to Slater determinants, the expectation of the Hamiltonian (\ref{eq:trapH}) can be expressed as a function of $\omega_N$. We obtain the Hartree-Fock energy functional
\begin{equation}\label{eq:HF-fun}
\begin{split} 
\cE_{\text{HF}} (\omega_N) = & \tr \left( -\eps^2 \Delta + V_\text{ext} \right) \omega_N \\ &+ \frac{1}{2N} \int dx dy \, V(x-y) \left[ \omega_N (x,x) \omega_N (y,y) - |\omega_N (x,y)|^2 \right]  .\end{split} \end{equation}

Hence, coming back to the dynamics, we are interested in understanding the solution of the 
Schr\"odinger equation (\ref{eq:schr}) for initial data close to Slater determinants whose reduced one-particle density $\omega_N$ minimizes the Hartree-Fock energy (\ref{eq:HF-fun}) among all orthogonal projections with trace equal to $N$. Because of the mean field character of the evolution, it is natural to expect that the evolution of such an initial data remains close to a Slater determinant, with an evolved reduced density $\omega_{N,t}$. If we assume for a moment that this is indeed the case, 
it is easy to show that the evolution of the reduced density $\omega_{N,t}$ must be governed by the time-dependent Hartree-Fock equation
\begin{equation}\label{eq:HF} i\eps \partial_t \omega_{N,t} = \left[ -\eps^2 \Delta + (V* \rho_t) - X_t , \omega_{N,t} \right] \end{equation}
with the initial data $\omega_{N,0} = \omega_N$. Here we defined the density $\rho_t (x) = N^{-1} \omega_{N,t} (x,x)$ and the exchange operator $X_t$, having the integral kernel $X_t (x,y) = N^{-1} V(x-y) \omega_{N,t} (x,y)$. In order to prove that the evolution of the initial Slater determinant, minimizing (\ref{eq:HF-fun}), remains close to a Slater determinant, we need to have a closer 
look at its reduced density $\omega_N$. It turns out that $\omega_N$ is characterized by a special semiclassical structure which plays a crucial role in the analysis of the evolution.

\medskip

{\it Acknowledgements.} M. Porta and B. Schlein gratefully acknowledge support by the ERC starting  grant MAQD-240518. 

\section{Semiclassical structure}

Let us begin this section by considering a system of $N$ free fermions moving in the box $\Lambda = [0;2\pi]^3 \subset \bR^3$ (for example, with periodic boundary conditions). The one-particle Hamiltonian (i.e. the Laplace operator on $\Lambda$, with periodic boundary conditions) has the eigenmodes $e^{ip \cdot x}$, for $p \in \bZ^3$, with energy $p^2$. The ground state of the $N$-particle system is therefore the Slater determinant (Fermi sea) obtained by filling the $N$ eigenmodes of the one-particle Hamiltonian having the $N$ smallest energies (because of the Pauli principle, we cannot occupy the same mode with more than one particle). Thus, the reduced one-particle density $\omega_N$ of the free ground state has the integral kernel
\[ \omega_N (x,y) = \sum_{p \in \bZ^3 : |p| \leq c N^{1/3}} e^{ip \cdot (x-y)} \]
for an appropriate constant $c > 0$ of order one. Changing variable $p \to \eps p = N^{-1/3} p$, we find
\[\begin{split} \omega_N (x,y) &= \sum_{p \in \eps \bZ^3 : |p| \leq c} e^{i p \cdot (x-y)/\eps} \\ & = N \sum_{p \in \eps \bZ^3: |p| < c} \eps^3 e^{ip \cdot (x-y)/\eps} \simeq N \int_{|p| \leq c} e^{i p \cdot (x-y)/\eps} dp . \end{split} \]
We find that $\omega_N (x,y) \simeq N \ph ((x-y)/\eps)$ for a function $\ph$ decaying to zero at infinity (it is possible to compute $\ph$ explicitly, but the result is not very important for our purposes). 
The important observation is that the kernel $\omega_N (x,y)$ is concentrated close to the diagonal $x=y$, and that it decays to zero, for $|x-y| \gg \eps$. 

If instead of imposing periodic boundary conditions we trap the particles with an external confining potential, we still expect the reduced one-particle density of the Slater determinant minimizing the energy to be concentrated close to the diagonal; in this case, however, it will also depend on the variable $(x+y)$, so that the density of the particles can vary locally, to better adapt to the profile of the external potential. In other words, we expect
\begin{equation}\label{eq:om-str} \omega_N (x,y) \simeq N \ph ((x-y)/\eps) \rho (x+y) \end{equation}
for appropriate functions $\ph, \rho$ (normalizing $\ph$ so that $\ph (0) = 1$,  $\rho (x) \simeq N^{-1} \omega_N (x,x)$ is the density of particles at point $x$). We expect therefore a clear separation of scales. The kernel $\omega_N (x,y)$ should vary on the microscopic scale $\eps$ in the $(x-y)$ direction; on the other hand, it should only vary on scales of order one in the $(x+y)$ direction.

In order to characterize the structure (\ref{eq:om-str}), it is useful to consider the commutators $[x,\omega_N]$ and $[\eps \nabla , \omega_N]$, having the integral kernels
\[ \begin{split} [x,\omega_N] (x,y) &= (x-y) \omega_N (x,y) \\
[\eps \nabla, \omega_N] (x,y) &= \eps (\nabla_x + \nabla_y) \omega_N (x,y) . \end{split} \]
Assuming the decomposition (\ref{eq:om-str}), the factors $(x-y)$ and $\eps (\nabla_x + \nabla_y)$ are both of size $\eps$ (at least if $\rho$ has some regularity). Hence, reduced densities with the semiclassical structure (\ref{eq:om-str}) satisfy the bounds
\begin{equation}\label{eq:semi-bd} \tr \; \left| [x,\omega_N] \right| \leq C N \eps \quad \text{and } \quad \tr \; \left| [\eps \nabla, \omega_N ] \right| \leq CN \eps . \end{equation}

So far, we argued that ground states of non-interacting systems are Slater determinants satisfying (\ref{eq:semi-bd}). What happens now if we turn on an interaction? Semiclassical analysis suggests that the general picture remains essentially unchanged. The minimizer of the Hartree-Fock functional (\ref{eq:HF-fun}), now with non-vanishing interaction, is expected to be close to the Weyl quantization 
\begin{equation}\label{eq:weyl} \omega_N (x,y) \simeq \text{Op}^w_M (x,y) = \frac{1}{(2\pi \eps)^3} \int dp \, M \left( \frac{x+y}{2}, p \right) \, e^{i p \cdot (x-y)/\eps} \end{equation}
of the phase-space density $M(x,p) = \chi (|p| \leq (6\pi^2 \rho_{TF} (x))^{1/3})$, where $\rho_{TF}$ minimizes the Thomas-Fermi energy functional 
\[ \cE_{\text{TF}} (\rho) = \frac{3}{5} (3\pi^2)^{2/3} \int \rho^{5/3} (x) dx + \int V_\text{ext} (x) \rho (x) dx + \frac{1}{2} \int V(x-y) \rho (x) \rho (y) \]
under the conditions $\rho \geq 0$ and $\int \rho \, dx = 1$. One can interpret (\ref{eq:weyl}) as stating that, like in the case of free fermions, the minimizer of (\ref{eq:HF-fun}) can be constructed by filling the one-particle modes with the smallest momenta. Here, however, we fill the Fermi sea locally, depending on $x$, according to the value of the Thomas-Fermi density. Taking (\ref{eq:weyl}) for granted, we find that
\[ [ x, \omega_N] = -i \eps \text{Op}^w_{\nabla_p M}, \quad \text{and } [\eps \nabla , \omega_N] = \eps \text{Op}^w_{\nabla_x M} . \]
Semiclassical analysis gives 
\[ \tr \left| [x, \omega_N ] \right| \simeq \frac{\eps}{(2\pi \eps)^3} \int dx dp |\nabla_p M (x,p)| = C N \eps \int dx \, \rho^{2/3}_\text{TF} (x) \leq C N \eps \]
and
\[ \tr \left| [\eps \nabla, \omega_N] \right| \simeq  \frac{\eps}{(2\pi \eps)^3} \int dx dp |\nabla_x M (x,p)| = CN \eps \int dx |\nabla \rho_{\text{TF}} (x)| \leq C N \eps , \]
in accordance with (\ref{eq:semi-bd}). 

The heuristic argument we just presented motivates the expectation that the initial data we are interested in, namely data close to the ground state of a Hamiltonian of the form (\ref{eq:trapH}), are approximate Slater determinants with one particle reduced density $\omega_N$ satisfying the semiclassical bounds (\ref{eq:semi-bd}). From now on, this will be our assumption; in other words, our main theorem will describe the time evolution of initial data with these properties. For such initial data, we will prove that the evolution stays close to the Slater determinant with reduced density $\omega_{N,t}$ satisfying the time dependent Hartree-Fock equation (\ref{eq:HF}).

\section{Fock space representation}

To state our theorem more precisely, we switch to a Fock space representation, so that the number of particles is allowed to fluctuate. We denote by
\[ \cF = \bigoplus_{n \geq 0} L^2_a (\bR^{3n}, dx_1 \dots dx_n) \]
the fermionic Fock space over $L^2 (\bR^3)$ ($L^2_a (\bR^{3n})$ denotes the subspace of $L^2 (\bR^{3n})$ consisting of antisymmetric wave functions). 

On $\cF$, we introduce as usual creation and annihilation operators satisfying the canonical anticommutation relations
\[ \{ a(f) , a^* (g) \} = \langle f,g \rangle, \quad \{ a( f), a(g) \} = \{ a^* (f) , a^* (g) \} = 0 \]
for all $f,g \in L^2 (\bR^3)$. We will also use the operator valued distributions $a_x^*, a_x$, which are formally creating and, respectively, annihilating a particle at the point $x$. In terms of these distributions, we define the number of particles operator 
\[ \cN = \int dx \, a_x^* a_x . \]
More generally, given a self-adjoint operator $A$ on $L^2 (\bR^3)$, we define its second quantization $d\Gamma (A)$ by
\[ (d\Gamma (A) \psi)^{(n)} = \sum_{j=1}^n A^{(j)} \psi^{(n)} \]
where $A^{(j)} = 1 \otimes \dots \otimes A \otimes \dots \otimes 1$ denotes the operator $A$ acting only on the $j$-th particle. It is easy to check that, if $A$ has the integral kernel $A(x,y)$, its second quantization can be expressed in terms of the operator valued distributions as
\[ d\Gamma (A) = \int dx dy \, A(x,y) a_x^* a_y .  \]
With this notation, we have $\cN = d\Gamma (1)$.

Next, we introduce an Hamilton operator on $\cF$, by setting $(\cH_N \psi)^{(n)} = \cH_N^{(n)} \psi^{(n)}$, where
\[ \cH_N^{(n)} = \sum_{j=1}^n -\eps^2 \Delta_{x_j} + \frac{1}{N} \sum_{i<j}^n V(x_i -x_j) . \]
By definition, the Hamiltonian $\cH_N$ leaves the number of particles invariant. In particular, on the $N$-particle sector, $\cH_N$ coincides exactly with the Hamiltonian generating the evolution (\ref{eq:schr}). In terms of the operator valued distributions $a_x^*, a_x$, $\cH_N$ can be expressed as
\[ \cH_N = \eps^2 \int dx \, \nabla_x a_x^* \nabla_x a_x + \frac{1}{2N} \int dx dy \, V(x-y) a_x^* a_y^* a_y a_x . \]

On the Fock space $\cF$, Slater determinants can be very conveniently generated by Bogoliubov transformations. Let $\omega_N = \sum_{j=1}^N |f_j \rangle \langle f_j|$ be the reduced density of an $N$-particle Slater determinant. The orthonormal family $\{ f_j \}_{j=1}^N$ can be completed to an orthonormal basis $\{ f_j \}_{j \in \bN}$ of $L^2 (\bR^3)$. Then there exists a unitary operator $R_{\omega_N} : \cF \to \cF$ such that
\[ R_{\omega_N} \Omega = a^* (f_1) \dots a^* (f_N) \Omega \]
is the Slater determinant with reduced density $\omega_N$ (here $\Omega = \{ 1, 0, 0, \dots \}$ denotes the Fock space vacuum), and 
\begin{equation}\label{eq:RaR} R_{\omega_N}^* a^* (f_j) R_{\omega_N} = \left\{ \begin{array}{ll} a^* (f_j), &\quad \text{if } j > N \\
a(f_j), &\quad \text{if } j \leq N \end{array} \right. .  \end{equation}
Taking the adjoint, we obtain a similar formula also for the action of $R_{\omega_N}$ on annihilation operators. The idea here is that the Bogoliubov transformation $R_{\omega_N}$ allows us to switch to a new representation of the canonical anticommutation relations. The new vacuum $R_{\omega_N} \Omega$ is the Slater determinant with reduced density $\omega_N$. The new creation operators $R_{\omega_N}^* a (f_j) R_{\omega_N}$ create a particle with wave function $f_j$ if $j > N$, while they create a hole in the Slater determinant if $j \leq N$. The new number of particles operator $R_{\omega_N}^* \cN R_{\omega_N}$ measures the number of particles outside the Slater determinant combined with the number of holes in the Slater determinant. In other words, it measures the number of excitations w.r.t. the Slater determinant; since our goal is exactly to prove closeness to a Slater determinant, this explains why Bogoliubov transformations are so useful for us, and play such an important role in our analysis.   

{F}rom (\ref{eq:RaR}), we conclude that, for arbitrary $f \in L^2 (\bR^3)$,
\begin{equation}\label{eq:bog-act} R_{\omega_N}^* a^* (f) R_{\omega_N} = a^* (u_N f) + a (\overline{v}_N \overline{f}) \end{equation}
where $u_N = 1- \omega_N$ and $v_N = \sum_{j=1}^N |\overline{f}_j \rangle \langle f_j|$ (recall that creation operators are linear and annihilation operators are antilinear in their arguments; this explains the emergence of complex conjugation).
 
\section{Main results}

We are now ready to present our main theorem, which describes the many-body (Fock space) evolution of approximate initial Slater determinants in terms of the Hartree-Fock equation.
\begin{theorem}\label{thm:main}
Let $V \in L^1 (\bR^3)$ with Fourier transform $\widehat{V}$ satisfying 
\begin{equation}\label{eq:assV} \int |\widehat{V} (p)| (1+ p^2) dp < \infty . \end{equation}
Let $\omega_N$ be a sequence of orthogonal projections on $L^2 (\bR^3)$ with $\tr \omega_N = N$ and such that 
\begin{equation}\label{eq:semi2} \tr \, \left| [x, \omega_N] \right| \leq C N \eps, \quad \text{and } \quad \tr \, \left| [\eps \nabla , \omega_N]\right| \leq C N \eps . \end{equation}
Let $\xi_N \in \cF$ be a sequence with $\| \xi_N \| = 1$ and $\langle \xi_N , \cN \xi_N \rangle \leq C$, uniformly in $N$. We consider the evolution 
\[ \psi_{N,t} = e^{-i\cH_N t/\eps} R_{\omega_N} \xi_N \]
and we denote by $\gamma_{N,t}^{(1)}$ the reduced one-particle density associated with $\psi_{N,t}$. Then there exist constants $C,c > 0$ such that 
\begin{equation}\label{eq:convHS} \left\| \gamma_{N,t}^{(1)} - \omega_{N,t} \right\|_{\text{HS}} \leq C \exp \left( c \exp ( c |t|) \right) \end{equation}
where $\omega_{N,t}$ is the solution of the Hartree-Fock equation
\begin{equation}\label{eq:HF2} i\eps \partial_t \omega_{N,t} = \left[ - \eps^2 \Delta + (V * \rho_t) - X_t , \omega_{N,t} \right] \end{equation}
with the initial data $\omega_{N,0} = \omega_N$. Assuming additionally that  $\langle \xi_N, \cN^2 \xi_N \rangle \leq C$ and the orthogonality condition $d\Gamma (\omega_N) \xi_N = 0$, we find constants $C,c > 0$ such that 
\begin{equation}\label{eq:convtr} \tr\, \left| \gamma^{(1)}_{N,t} - \omega_{N,t} \right| \leq C N^{1/6} \exp \left( c \exp (c |t|) \right) . \end{equation}
\end{theorem}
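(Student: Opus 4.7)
The strategy is to pass to a frame of excitations around the evolving Slater determinant via the Bogoliubov transformation $R_{\omega_{N,t}}$, where $\omega_{N,t}$ is the solution of the Hartree-Fock equation (\ref{eq:HF2}) with initial datum $\omega_N$. A preliminary step is to prove global well-posedness of (\ref{eq:HF2}) and to propagate the semiclassical commutator bounds (\ref{eq:semi2}) in time, with constants growing at most exponentially in $|t|$; this reduces to a Duhamel argument using assumption (\ref{eq:assV}). I then define the fluctuation dynamics
$$ \xi_{N,t} := \cU_N(t;0)\,\xi_N, \qquad \cU_N(t;0) := R_{\omega_{N,t}}^* \, e^{-i\cH_N t/\eps}\, R_{\omega_N}. $$
A direct computation relates the one-particle reduced density $\gamma^{(1)}_{N,t}$ of $\psi_{N,t}$ to the fluctuation vector by
$$ \|\gamma^{(1)}_{N,t}-\omega_{N,t}\|_{\text{HS}}^2 \;\leq\; C\,\langle \xi_{N,t}, \cN\, \xi_{N,t}\rangle, $$
so that (\ref{eq:convHS}) reduces to a Grönwall bound on the number of excitations.

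Next I would derive the generator $\cL_N(t)$ of the fluctuation dynamics, $i\eps\,\pt_t \cU_N = \cL_N(t)\, \cU_N$, by conjugating every $a_x^{\#}$ in $\cH_N$ by $R_{\omega_{N,t}}$ via (\ref{eq:bog-act}). The quadratic part of $\cH_N$, combined with the contribution of $\pt_t R_{\omega_{N,t}}$, is designed to cancel against the Hartree-Fock mean-field and exchange terms, leaving in $\cL_N(t)$ only a $c$-number plus cubic and quartic monomials in $a, a^*$ (smeared by $u_{N,t}=1-\omega_{N,t}$ and $v_{N,t}$) weighted by $N^{-1}V(x-y)$.

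The heart of the proof is the bound
$$ \Bigl|\tfrac{d}{dt}\langle \xi_{N,t},\cN \xi_{N,t}\rangle\Bigr| = \eps^{-1}\bigl|\langle \xi_{N,t}, [i\cL_N(t),\cN]\xi_{N,t}\rangle\bigr| \leq C(t)\bigl(\langle \xi_{N,t},\cN \xi_{N,t}\rangle + 1\bigr), $$
in which only the cubic piece of $\cL_N(t)$ actually contributes. To estimate it, I would write $V(x-y) = \int \wh V(p)\,e^{ip\cdot x}e^{-ip\cdot y}\,dp$ and use the propagated semiclassical bound
$$ \|[e^{ip\cdot x}, \omega_{N,t}]\|_{\text{HS}} \leq C\sqrt{N}\,\eps\,(1+|p|), $$
obtained from (\ref{eq:semi2}) applied to $\omega_{N,t}$. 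Integrating against $|\wh V(p)|(1+p^2)$, which is finite by (\ref{eq:assV}), one closes the commutator bound. Since $C(t)$ itself may grow exponentially (reflecting the exponential loss in the propagation of (\ref{eq:semi2})), Grönwall produces the doubly exponential rate $\exp(c\exp(c|t|))$ in (\ref{eq:convHS}).

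For the trace-norm estimate (\ref{eq:convtr}) one additionally needs to propagate $\langle \cN^2\rangle$ along $\cU_N(t;0)$. The orthogonality assumption $d\Ga(\omega_N)\xi_N=0$ is crucial here: it guarantees that $\xi_N$ contains no quasi-particles ``inside the Fermi sea'' at $t=0$, eliminating an otherwise dangerous $\sqrt N$ contribution when computing $[\cL_N(t),\cN^2]$. Combined with the $\cN$-bound, one then uses the refined inequality
$$ \tr\bigl|\gamma^{(1)}_{N,t}-\omega_{N,t}\bigr| \leq C\, \langle \xi_{N,t},\cN\xi_{N,t}\rangle^{1/2} + C\, N^{1/6}\, \langle \xi_{N,t},\cN^2\xi_{N,t}\rangle^{1/4}, $$
obtained by splitting $\gamma^{(1)}_{N,t}-\omega_{N,t}$ into its $\cN$-controlled Hilbert–Schmidt part and its low-rank (rank $\sim N^{1/3}$) contribution, to deduce (\ref{eq:convtr}). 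The main obstacle I expect is precisely the propagation of $\langle \cN^2\rangle$: the cubic pieces of $\cL_N(t)$ carry a prefactor $N^{-1}$ but can create two excitations at once, so their naive contribution to $[\cL_N(t),\cN^2]$ is of size $N^{1/2}$. Restoring the missing factor $\eps=N^{-1/3}$ requires exploiting the semiclassical cancellation $\|[e^{ip\cdot x},\omega_{N,t}]\|_{\text{HS}}\leq C\sqrt N\eps(1+|p|)$ together with its $\eps\nabla$ analogue, and it is in this step that the full strength of the weight $1+p^2$ in assumption (\ref{eq:assV}) is used.
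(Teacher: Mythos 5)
Your proposal follows essentially the same route as the paper: conjugation by the particle--hole transformation $R_{\omega_{N,t}}$, reduction of (\ref{eq:convHS}) to a Gr\"onwall bound on $\langle \xi_{N,t},\cN\xi_{N,t}\rangle$, a Fourier decomposition of $V$, and the propagated semiclassical commutator bounds to gain the factor $\eps$ that compensates the $\eps^{-1}$ in the time derivative. Two small corrections are in order. First, the fermionic particle--hole conjugation of the quartic interaction produces no cubic monomials (contractions remove operators in pairs); the non-number-conserving terms whose commutator with $\cN$ survives are quartic, exactly the three types displayed in (\ref{eq:iden}). Second, (\ref{eq:semi2}) is a \emph{trace-norm} assumption, and what it yields is $\|[e^{ip\cdot x},\omega_{N,t}]\|_{\text{HS}}^2 \leq \tr\,|[e^{ip\cdot x},\omega_{N,t}]| \leq C(1+|p|)N\eps\, e^{c|t|}$, not the stronger pointwise bound $\|[e^{ip\cdot x},\omega_{N,t}]\|_{\text{HS}}\leq C\sqrt{N}\,\eps\,(1+|p|)$ you invoke (which would amount to a Hilbert--Schmidt semiclassical hypothesis not made here); the weaker, correct bound still improves the trivial estimate $\|\bar v_{N,t}e^{ip\cdot x}u_{N,t}\|_{\text{HS}}^2\leq N$ by the required factor $\eps$, so the Gr\"onwall argument closes unchanged. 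Finally, note that the paper itself does not prove the trace-norm estimate (\ref{eq:convtr}) but defers it to \cite{BPS}; your sketch of that part --- the interpolation inequality with $N^{1/6}\langle\cN^2\rangle^{1/4}$ and the precise role of the orthogonality condition $d\Gamma(\omega_N)\xi_N=0$ --- is plausible in spirit but is asserted rather than derived, and would need the full argument of \cite{BPS} to be substantiated.
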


\begin{remark}
\begin{itemize}
\item[i)] Taking $\xi_N = \Omega$, the theorem describes the time-evolution of the initial Slater determinant $R_{\omega_N} \Omega$. Even if $\xi_N \not = \Omega$, the assumption that $\langle \xi_N , \cN \xi_N\rangle \leq C$, uniformly in $N$, guarantees that the initial data is close to the Slater determinant with reduced density $\omega_N$ (for example, in the sense of (\ref{eq:convHS}), which holds, in particular, at $t=0$). 
It is easy to extend the bound (\ref{eq:convHS}) to the evolution of initial data of the form $R_{\omega_N} \xi_N$, where $\langle \xi_N, \cN \xi_N \rangle \leq C N^\alpha$ for $0 \leq \alpha < 1$. In this case, we have to replace (\ref{eq:convHS}) by 
\begin{equation}\label{eq:exte} \left\| \gamma^{(1)}_{N,t} - \omega_{N,t} \right\|_{\text{HS}} \leq CN^{\alpha/2} \exp (c \exp (c |t|)) \, . \end{equation}
\item[ii)] The bound (\ref{eq:convHS}) is optimal in its $N$ dependence; it should be compared with $\| \gamma_{N,t}^{(1)} \|_{\text{HS}} \simeq N^{1/2}$ and $\| \omega_{N,t} \|_{\text{HS}} = N^{1/2}$. The bound (\ref{eq:convtr}) is not expected to be optimal (the optimal estimate should probably be of the order one in $N$); still it gives more precise information on the many-body evolution (since it should be compared with the normalization $\tr \gamma_{N,t}^{(1)} = \tr \omega_{N,t} = N$).
\item[iii)] Results similar to Theorem  \ref{thm:main} have been recently \cite{BPS2} obtained also for fermions with a relativistic dispersion; in this case, the many body Schr\"odinger evolution is approximated by a semirelativistic Hartree-Fock equation.
\item[iv)] It is easy to show that the exchange term appearing in the Hartree-Fock equation (\ref{eq:HF}) is of smaller order, compared with the other terms. For example, using the formula
\[ [X_t, \omega_{N,t}] (x,y) = N^{-1} \int dz (V(x-z) - V(y-z)) \omega_{N,t} (x,z) \omega_{N,t} (z,y) \]
we can estimate the Hilbert-Schmidt norm of the commutator $[X_t, \omega_{N,t}]$ (assuming the potential to be bounded, as follows from (\ref{eq:assV})) by
\[ \begin{split} \| [X_t , &\omega_{N,t} ] \|_{\text{HS}}^2 \\ &= \frac{1}{N^2} \int dx dy dz_1 dz_2  \omega_{N,t} (x,z_1) \omega_{N,t} (z_1 ,y) \omega_{N,t} (x,z_2) \omega_{N,t} (z_2,y) \\ & \hspace{2cm} \times \left( V(x-z_1) - V(y-z_1) \right) \left( V(x-z_2) - V(y-z_2) \right) \\ &\leq \frac{C}{N^2} \| \omega_{N,t} \|_{\text{HS}}^4 \leq C . \end{split} \]
For this reason, it is possible to absorb the contribution of the exchange term in the error on the r.h.s. of (\ref{eq:convHS}) (and, similarly, on the r.h.s. of (\ref{eq:convtr})). As a consequence, the bounds (\ref{eq:convHS}) and (\ref{eq:convtr}) remain valid if we replace the solution $\omega_{N,t}$ of the Hartree-Fock equation (\ref{eq:HF}) with the solution $\wt{\omega}_{N,t}$ of the fermionic Hartree equation
\begin{equation}\label{eq:H} i\eps \partial_t \wt{\omega}_{N,t} = \left[ - \eps^2 \Delta + (V * \wt{\rho}_t) , \wt{\omega}_{N,t} \right] \end{equation}
of course with the initial data $\wt{\omega}_{N,0} = \omega_N$ and with $\wt{\rho}_t (x) = N^{-1} \wt{\omega}_{N,t} (x,x)$.
\item[v)] The Hartree-Fock equation (\ref{eq:HF2}) and the Hartree equation (\ref{eq:H}) still depend on $N$ through the semiclassical parameter $\eps = N^{-1/3}$. As $N \to \infty$, the Hartree-Fock and the Hartree dynamics can be approximated by the Vlasov equation. We define the Wigner transform $W_{N,t}$ associated to the solution $\omega_{N,t}$ of the Hartree-Fock equation by setting
\[ W_{N,t} (x,v) = \frac{\varepsilon^{3}}{(2\pi)^3} \int dy \, \omega_{N,t} \left( x+ \frac{\eps y}{2} , x- \frac{\eps y}{2} \right) \, e^{- i v \cdot y} . \]
As $N \to \infty$ we have, in an appropriate sense, $W_{N,t} \to W_{\infty,t}$, where $W_{\infty,t}$ solves the Vlasov equation
\[ \partial_t W_{\infty,t} + 2 v \cdot \nabla_x W_{\infty,t} - \nabla \left( V * \rho_{\infty,t} \right) \cdot \nabla_v W_{\infty,t} = 0  \]
with the density $\rho_{\infty,t} (x) = \int W_{\infty,t} (x,v) dv$. It should be observed, however, that the Hartree-Fock and the Hartree equation give a better approximation to many body quantum mechanics, compared with the classical Vlasov equation. While the relative size of the corrections to 
the Vlasov equation is of the order $\eps = N^{-1/3}$, (\ref{eq:convtr}) shows that the Hartree-Fock equation is correct up to errors of relative size $N^{-5/6}$.
\end{itemize}
\end{remark}

Using Theorem \ref{thm:main}, it is also possible to study the evolution of approximated Slater determinants with fixed number of particles $N$. This is the content of the next corollary. 
\begin{corollary}\label{cor}
 Let $V \in L^1 (\bR^3)$ with Fourier transform $\widehat{V}$ satisfying (\ref{eq:assV}). Let 
 $\omega_N$ be a sequence of orthogonal projections on $L^2 (\bR^3)$ with $\tr \omega_N = N$ and 
 satisfying (\ref{eq:semi2}). Let $\psi_N \in L^2_a (\bR^{3N})$ be a sequence with $\| \psi_N \| = 1$ and with one-particle reduced density $\gamma_{N}^{(1)}$ satisfying 
\begin{equation}\label{eq:ass0} \tr \, \left| \gamma^{(1)}_N - \omega_N \right| \leq C N^\alpha  \end{equation}
for some $\alpha \in [0;1)$. Let $\psi_{N,t} = e^{-iH_N t/\eps} \psi_N$ and denote by $\gamma_{N,t}^{(1)}$ the reduced one-particle density associated with $\psi_{N,t}$. Then there exist constants $C,c > 0$ such that 
\[ \left\| \gamma^{(1)}_{N,t} - \omega_{N,t} \right\|_{\text{HS}} \leq CN^{\alpha/2} \exp (c \exp (c |t|)) \]
where $\omega_{N,t}$ is the solution of the Hartree-Fock equation (\ref{eq:HF2}) or of the Hartree equation (\ref{eq:H}) with initial data $\omega_{N,0} = \omega_N$.
\end{corollary}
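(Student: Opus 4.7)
The plan is to reduce Corollary \ref{cor} directly to Theorem \ref{thm:main} via a Bogoliubov transformation. Embedding $\psi_N$ into Fock space as an $N$-particle vector, the fact that $\cH_N$ preserves particle number and coincides with $H_N$ on the $N$-particle sector gives $e^{-i\cH_N t/\eps} \psi_N = e^{-i H_N t/\eps} \psi_N$. Setting $\xi_N := R_{\omega_N}^* \psi_N \in \cF$, unitarity of $R_{\omega_N}$ yields $\psi_N = R_{\omega_N} \xi_N$, so the evolved state is exactly $e^{-i\cH_N t/\eps} R_{\omega_N} \xi_N$, i.e., the form appearing in Theorem \ref{thm:main}. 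The bound (\ref{eq:exte}) from Remark (i) then applies as soon as $\langle \xi_N, \cN \xi_N\rangle \leq C N^\alpha$ uniformly in $N$.

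The second step is to establish this bound from the hypothesis (\ref{eq:ass0}). Using (\ref{eq:bog-act}) to conjugate $a_x$ and $a_x^*$ by $R_{\omega_N}$, integrating $a_x^* a_x$ in $x$, normal-ordering and taking the expectation in $\psi_N$ yields the standard identity
\[ \langle \xi_N, \cN \xi_N\rangle = \tr\left[(1-\omega_N)\gamma_N^{(1)}\right] + \tr\left[\omega_N(1 - \gamma_N^{(1)})\right] = \tr \gamma_N^{(1)} + \tr \omega_N - 2\tr(\omega_N \gamma_N^{(1)}). \]
Since $\tr \gamma_N^{(1)} = \tr \omega_N = N$ and $\omega_N^2 = \omega_N$, the right-hand side equals $2 \tr[\omega_N(\omega_N - \gamma_N^{(1)})]$, which I bound by $2\|\omega_N\|_{\mathrm{op}} \tr|\gamma_N^{(1)} - \omega_N| \leq C N^\alpha$, thanks to (\ref{eq:ass0}).

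Inserting this into (\ref{eq:exte}) gives the stated Hilbert-Schmidt estimate, and the optional replacement of the Hartree-Fock dynamics by the Hartree dynamics follows from Remark (iv), where the exchange term is shown to contribute an error uniformly bounded in Hilbert-Schmidt norm. No serious obstacle is expected: the corollary is essentially a repackaging of Theorem \ref{thm:main}, the only new input being the identity above linking the number of excitations above the Slater determinant to the trace distance $\tr|\gamma_N^{(1)} - \omega_N|$. The one mild technicality is keeping track of the complex conjugates appearing in $v_N$ (due to antilinearity of annihilation operators) when computing $R_{\omega_N}^* a_x^* a_x R_{\omega_N}$, but the resulting formula is now standard in the fermionic mean-field literature and causes no genuine difficulty.
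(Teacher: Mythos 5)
Your proposal is correct and follows essentially the same route as the paper: identify $\psi_N$ with a Fock space vector, set $\xi_N = R_{\omega_N}^* \psi_N$, show $\langle \xi_N, \cN \xi_N\rangle \leq 2\tr|\gamma_N^{(1)}-\omega_N| \leq CN^\alpha$ via the conjugation identity $R_{\omega_N}\cN R_{\omega_N}^* = \cN - 2d\Gamma(\omega_N)+N$, and invoke the extension (\ref{eq:exte}) of Theorem \ref{thm:main}. Your algebraic path to the bound (rewriting as $2\tr[\omega_N(\omega_N-\gamma_N^{(1)})]$ and using $\|\omega_N\|_{\mathrm{op}}\leq 1$) is a trivially equivalent variant of the paper's $2\tr[(\gamma_N^{(1)}-\omega_N)(1-\omega_N)]$.
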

\begin{proof}
We identify $\psi_N$ with the Fock space vector $\{ 0, \dots , 0, \psi_N , 0 \dots \}$ and we set 
\[ \xi_N = R_{\omega_N}^* \psi_N . \]
We compute
\[ \begin{split}  \langle \xi_N, \cN \xi_N \rangle &= \langle \psi_N, R_{\omega_N} \cN R^*_{\omega_N} \psi_N \rangle \\ &= \langle \psi_N, \left( \cN - 2 d\Gamma (\omega_N) + N \right) \psi_N \rangle \\ &= 2\langle \psi_N, d\Gamma (1-\omega_N) \psi_N \rangle = 2 \tr \, \gamma_{N}^{(1)} (1-\omega_N) \\ &= 2 \tr \, ( \gamma_N^{(1)}-\omega_N) \left(1- \omega_N \right) \leq 2 \tr\, | \gamma_N^{(1)} - \omega_N | . \end{split} \]
The assumption (\ref{eq:ass0}) implies $\langle \xi_N, \cN \xi_N \rangle \leq C N^{\alpha}$. Therefore, the corollary follows from the observation (\ref{eq:exte}).
\end{proof}

Let us compare Theorem \ref{thm:main} and Corollary \ref{cor} with previous results available in the literature. The first rigorous result on mean field  evolution of fermions has been obtained by Narnhofer and Sewell \cite{NS}, who showed the convergence of the solution of the many body Schr\"odinger equation towards the Vlasov equation, for analytic potentials. Spohn \cite{Sp} extended the previous result proving convergence towards Vlasov for potentials $V \in C^2 (\bR^3)$. The last two results do not give bounds on the rate of the convergence. More recently \cite{EESY}, convergence towards Hartree dynamics has been established for analytic interaction potentials and for short times. Theorem \ref{thm:main} is comparable with this last result, but it improves it because it holds for all times (of order one) and for a larger class of interaction potentials. 
Convergence of the many body dynamics towards the Hartree-Fock equation has also been established for different scalings by Bardos-Golse-Gottlieb-Mauser \cite{BGGM} (for bounded potentials) e by Knowles-Fr\"ohlich \cite{FK} (for a Coulomb interaction); in the regime considered in these papers there is no semiclassical limit involved. Related results proving that Hartree-Fock theory approximates the many body ground state energy, up to errors of the (relative) size $N^{-2/3 -\delta}$, for a $\delta > 0$, have been obtained for models of atoms and molecules (in which electrons interact through a Coulomb potential) by Bach \cite{B} and by Graf and Solovej \cite{GS}.

\section{Strategy of the proof}

We define the fluctuation vector $\xi_{N,t} \in \cF$ by requiring that 
\[ \psi_{N,t} = e^{-i \cH_N t/\eps}  R_{\omega_N} \xi_N = R_{\omega_{N,t}} \xi_{N,t} \]
where $\omega_{N,t}$ is the solution of the Hartree-Fock equation (\ref{eq:HF}). Equivalently, $\xi_{N,t} = \cU_N (t) \xi_N$, where we defined the fluctuation dynamics
\[ \cU_N (t) = R^*_{\omega_{N,t}} e^{-i\cH_N t/\eps} R_{\omega_N} \, . \]
Having $\xi_{N,t} = \Omega$ would imply that $\psi_{N,t}$ is exactly the Slater determinant with reduced density $\omega_{N,t}$. Of course, for $t \not = 0$, this will never be the case, even if initially  $\xi_N = \Omega$. Still, this remark suggests that, in order to prove that $\psi_{N,t}$ is close to a Slater determinant, it is enough to show that, in an appropriate sense, $\xi_{N,t}$ is close to the vacuum $\Omega$. In fact, using (\ref{eq:bog-act}), it is easy to show that 
\begin{equation}\label{eq:HS-bd} \begin{split} \| \gamma_{N,t}^{(1)} - \omega_{N,t} \|_{\text{HS}}^2 \leq 2 \tr \gamma_{N,t}^{(1)} (1-\omega_{N,t}) = 2 \langle \psi_{N,t}, d\Gamma (1-\omega_{N,t}) \psi_{N,t} \rangle = \langle \xi_{N,t}, \cN \xi_{N,t} \rangle . \end{split} \end{equation}
Hence, (\ref{eq:convHS}) follows from bounds on the expectation of $\cN$ in the state $\xi_{N,t}$ (the estimate (\ref{eq:convtr}) requires substantially more work \cite{BPS}; we will not discuss it here). 

To obtain these bounds, we intend to use Gronwall's lemma. Hence, we 
compute the derivative 
\[ \begin{split} 
i\eps \, \partial_t  &\langle \, \xi_{N,t} , \cN \,  \xi_{N,t} \rangle \\ &= i \eps \, \partial_t  \left\langle R_{\omega_N} \xi_N, e^{i \cH_N t/\eps} \left( \cN - 2 d\Gamma (\omega_{N,t}) + N \right) e^{-i\cH_N t/\eps} R_{\omega_N} \xi_N \right\rangle \\ &= 2 \left\langle e^{-i\cH_N t/\eps} R_{\omega_N} \xi_N, \left\{ [ \cH_N , d\Gamma (\omega_{N,t}) ] - d\Gamma (i\eps \, \partial_t \omega_{N,t}) \right\}  
e^{-i\cH_N t/\eps} R_{\omega_N} \xi_N \right\rangle .
\end{split} \]
There are many cancellations between the two summands in the parenthesis. In particular, all contributions which are quadratic in creation and annihilation operators cancel exactly. 
After some algebraic manipulations, we find the identity
\begin{equation}\label{eq:iden}  \begin{split} 
i\eps \, \partial_t  \langle \, \xi_{N,t} , \cN \, \xi_{N,t} \rangle &= -4i \text{Im } \frac{1}{N} \int dx dy V(x-y) \\  &\hspace{.3cm} \times \big\langle \, \xi_{N,t} , \big\{  a^* (u_{N,t,y}) a^* (\bar{v}_{N,t,y}) a^* (\bar{v}_{N,t,x}) a (\bar{v}_{N,t,x}) \\  &\hspace{2cm}+ a^* (u_{N,t,x})  a(u_{N,t,x}) a (\bar{v}_{N,t,y}) a(u_{N,t,y})   \\  &\hspace{2cm} + a (u_{N,t,x})  a(\bar{v}_{N,t,x}) a (\bar{v}_{N,t,y}) a (u_{N,t,y}) \big\} \,  \xi_{N,t} \big\rangle 
\end{split} \end{equation}
where $u_{N,t} = 1- \omega_{N,t}$ and $v_{N,t}$ is constructed from $\omega_{N,t}$ as explained after (\ref{eq:bog-act}) (we use here the notation $u_{N,t,x} (z) = u_{N,t} (x,z)$ and similarly for $\bar{v}_{N,t,x}$).

In order to apply Gronwall's inequality, we need to bound the r.h.s. of (\ref{eq:iden}) in terms of $\langle \xi_{N,t} , \cN \xi_{N,t} \rangle$. Let us consider, for example, the contribution of the last term in the parenthesis. Expanding the potential in a Fourier integral, we find
\begin{equation}\label{eq:third} \begin{split} \frac{1}{N} &\int dx dy \, V(x-y) \left\langle \xi_{N,t} , a (u_{N,t,x}) a(\bar{v}_{N,t,x}) a (\bar{v}_{N,t,y}) a (u_{N,t,y})  \xi_{N,t} \right\rangle \\ & = \frac{1}{N} \int dp \, \widehat{V} (p) \Big\langle \int dr_1 ds_1 (\bar{v}_{N,t} e^{ip\cdot x} u_{N,t}) (r_1, s_1) a_{r_1}^* a_{s_1}^* \, \xi_{N,t},  \\ &\hspace{4cm} \int dr_2 ds_2 (\bar{v}_{N,t} e^{ip \cdot x} u_{N,t}) (r_2, s_2) a_{r_2} a_{s_2} \, \xi_{N,t} \Big\rangle .\end{split} \end{equation}

To estimate the r.h.s. of (\ref{eq:third}), we observe that, for any operator $A$ on $L^2 (\bR^3)$ with integral kernel $A(x,y)$, we have the inequality 
\[ \left\|  \int dx dy \, A(x,y) \, a^\sharp_x a^\sharp_y \psi \right\| \leq \| A \|_{\text{HS}} \| (\cN+1)^{1/2} \psi \| \]
where $a^\sharp$ is either an annihilation operator $a$ or a creation operator $a^*$. Applying this bound to (\ref{eq:third}), we conclude that
\begin{equation}\label{eq:bd0}  \begin{split} 
\Big|  \frac{1}{N} &\int dx dy V(x-y) \left\langle \xi_{N,t},  a (u_{N,t,x}) a(\bar{v}_{N,t,x}) a (\bar{v}_{N,t,y}) a (u_{N,t,y})  \xi_{N,t} \right\rangle \Big| \\  &\hspace{2cm} \leq \frac{1}{N} \int dp\,  |\widehat{V} (p)| \, \| \bar{v}_{N,t} \, e^{i p\cdot x} u_{N,t} \|_{\text{HS}}^2  \, \| (\cN+1)^{1/2} \, \xi_{N,t} \|^2 . \end{split} \end{equation}

Since the operator norm of $u_{N,t}$ is bounded by one, we find \begin{equation}\label{eq:bd1} \| \bar{v}_{N,t} \, e^{i p\cdot x} u_{N,t} \|_{\text{HS}}^2  \leq \| \bar{v}_{N,t} \|_{\text{HS}}^2 = \tr \, \omega_{N,t} = N . \end{equation} 
Hence, assuming (\ref{eq:assV}), we obtain 
\[  \begin{split} 
\Big|  \frac{1}{N} &\int dx dy V(x-y) \left\langle \xi_{N,t},  a (u_{N,t,x}) a(\omega_{N,t,x}) a (\omega_{N,t,y}) a (u_{N,t,y})  \xi_{N,t} \right\rangle \Big| \\  &\hspace{7cm} \leq C \langle \, \xi_{N,t},  (\cN+1) \xi_{N,t} \rangle . \end{split} \]
However, because of the factor $\eps = N^{-1/3}$ on the l.h.s. of (\ref{eq:iden}), this bound is not sufficient, yet. Instead, we have to squeeze out an additional factor of $\eps$. To this end, we notice that, using the orthogonality $\bar{v}_{N,t} u_{N,t} = 0$, the estimate (\ref{eq:bd1}) can be improved to
\begin{equation}\label{eq:bd2} \begin{split}  \| \bar{v}_{N,t} \, e^{i p \cdot x} u_{N,t} \|^2_{\text{HS}} &= \| \bar{v}_{N,t} \, [ e^{ip \cdot x}, u_{N,t} ] \|^2_{\text{HS}} = \| \bar{v}_{N,t} \, [ e^{ip\cdot x} , \omega_{N,t}] \|^2_{\text{HS}} \\ & \leq \| [e^{i p \cdot x} , \omega_{N,t} ] \|^2_{\text{HS}} \leq \tr |[e^{ip \cdot x} , \omega_{N,t}]| \\ &\leq C (1+|p|) \tr |[x,\omega_{N,t}]| . \end{split} \end{equation}

At time $t=0$, the r.h.s. of (\ref{eq:bd2}) is bounded, according to the first semiclassical bound in  (\ref{eq:semi2}), by $C(1+|p|) N\eps$ (and hence it is smaller than (\ref{eq:bd1}) by a factor $\eps$, as desired). Using also the second semiclassical bound in (\ref{eq:semi2}) for the initial density $\omega_N$, it is possible to propagate these estimates along the solution of the Hartree-Fock equation, showing in particular that, for any $t \in \bR$,
\[ \tr |[x,\omega_{N,t} ]| \leq C N \eps \exp (c |t|)\,. \]
Inserting this inequality in (\ref{eq:bd2}) and then plugging the result in (\ref{eq:bd0}), we conclude that
\[ \begin{split} \Big|  \frac{1}{N} \int dx dy V(x-y) &\left\langle \xi_{N,t},  a (u_{N,t,x}) a(\bar{v}_{N,t,x}) a (\bar{v}_{N,t,y}) a (u_{N,t,y})  \xi_{N,t} \right\rangle \Big| \\ &\hspace{5cm} \leq C \eps \exp (c|t|)  \langle \xi_{N,t} , \cN \xi_{N,t} \rangle . \end{split} \]
This estimate, together with similar bounds for the other terms on the r.h.s. of (\ref{eq:iden}), implies that
\[ \left| \frac{d}{dt} \langle \xi_{N,t}, \cN \xi_{N,t} \rangle \right| \leq C e^{c|t|}  \langle \xi_{N,t}, (\cN + 1) \xi_{N,t} \rangle . \]
{F}rom Gronwall's inequality, we find 
\[ \langle \xi_{N,t}, (\cN + 1) \xi_{N,t} \rangle \leq C\exp (c \exp (c |t|)) . \]
With (\ref{eq:HS-bd}), this implies the claim (\ref{eq:convHS}). 


\bibliographystyle{ws-procs975x65}
\bibliography{ws-pro-sample}

\end{document}